\documentclass[submission,copyright]{eptcs}

 % Name of the event you are submitting to

\tolerance20000
\hbadness19000

\usepackage{amsmath,amssymb,stmaryrd,amsthm}
\usepackage{graphicx}
\usepackage{formal, chi, samp, mathpartir}
\usepackage{url}
\usepackage{flafter}
\usepackage{subfig}
\usepackage{tikz}

\usetikzlibrary{shapes,arrows,automata}
\usetikzlibrary{shapes.symbols}
\usetikzlibrary{shapes.misc}
\usetikzlibrary{calc}

\newcommand{\mCIF} {{\text{$\mathrm{hCIF}^{\subset}$}\,}}
\newcommand{\oCIF} {{\text{CIF}\,}}
\newcommand{\paras}{\parallel_{\synca}}

\newcommand{\nvals}[1]{\ensuremath{#1^+}}

%action transition
\renewcommand{\trans}[3]{\langle #1 \rangle \xrightarrow{#2} \langle #3 \rangle }

%delay transition
\newcommand{\mtrans}[3]{\langle #1 \rangle \stackrel{#2}{\longmapsto} \langle #3 \rangle}

%environment transition
\renewcommand{\ctrans}[3]{\langle #1 \rangle \stackrel{#2}{\dashrightarrow} \langle #3 \rangle}

\newcommand{\eedge}{\ensuremath{\sigma,\sigma'\models_\alpha
    (v,g,a,r,v')}}

\newcommand{\vflow}{\ensuremath{t,\rho\models_\alpha\langle\initf(v),\tcpf(v)\rangle}}
%parameterized vflow

\newcommand{\Sosrule}%
[3]%
{%
  \begin{center}\(
    \sosrule{
      \begin{array}[t]{@{}c}
        #1
      \end{array}
    }
    {
      \begin{array}[t]{@{}c}
        #2
      \end{array}
    }
    {#3}\)
  \end{center}
}

\theoremstyle{plain}
\newtheorem{theorem}{Theorem}[section]

\newtheorem{lemma}[theorem]{Lemma}

\theoremstyle{definition}
\newtheorem{definition}[theorem]{Definition}
\newtheorem{proposition}[theorem]{Proposition}

\newcounter{remarkcounter}
\setcounter{remarkcounter}{0}

\def\remark{\par\medskip\noindent\refstepcounter{remarkcounter}\hbox{\bf Remark \arabic{remarkcounter}.}
\ %\ignorespaces
}
\def\endremark{\hbox{\bf End of Remark.}
\par\medskip}

\newcounter{excounter}
\setcounter{excounter}{0}

\def\example{\par\medskip\noindent\refstepcounter{excounter}\hbox{\bf Example \arabic{excounter}.}
\ %\ignorespaces
}
\def\endexample{\hbox{\bf End of Example.}
\par\medskip}

%\newenvironment{remark}{\vspace{1ex}\noindent{\bf Remark.}\hspace{0.5em}}  {{\bf End of Remark.}\vspace{1ex}}

%irule for Sosrule.
\setcounter{irule}{0}

% The function that determines whether a location is active
\DeclareMathOperator{\initf}{init}
\DeclareMathOperator{\finitf}{\hat{init}}

\newcommand{\actv}{\ensuremath{\mathrm{init}}}

% Set of vertices
\newcommand{\Vertices}[0]{\ensuremath{\mathcal{L}}}
\renewcommand{\Pred}{\ensuremath{\mathcal{P}}}
\newcommand{\ASpec}{\ensuremath{\mathcal{C}}}
\newcommand{\automata}{\ensuremath{\mathcal{M}}}

%%encoding function
\renewcommand{\enc}{\ensuremath{\mathcal{E}}}

% Set of variables (CIF report pag. 8)
\newcommand{\Vars}[0]{\cV}
\newcommand{\values}{\ensuremath{\Lambda}}
\newcommand{\timep}{\ensuremath{\mathbb{T}}}

% Domain of a function
\newcommand{\Dom}[1]{\ensuremath{\mathrm{dom}(#1)}}

% The eqn function
%
% No longer used. I commented the command to avoid possible
% temptations of the authors
%
% \newcommand{\eqnf}{\ensuremath{\mathrm{eqn}}}

% The tcp function
\DeclareMathOperator{\tcpf}{tcp}
\renewcommand{\tcp}{\ensuremath{\mathrm{tcp}}}
\DeclareMathOperator{\ftcpf}{\hat{tcp}}

% term function
\DeclareMathOperator{\termf}{term}
\DeclareMathOperator{\ftermf}{\hat{term}}

\newcommand{\tuple}[3]{\ensuremath{(V^{#2}_{#1},\initf^{#2}_{#1}, \tcpf^{#2}_{#1},E^{#2}_{#1},\termf^{#2}_{#1},#3)}}

\newcommand{\htuple}[2]{\ensuremath{(\hat{V}^{#2}_{#1},\hat{\initf}^{#2}_{#1}, \hat{\tcpf}^{#2}_{#1},\hat{E}^{#2}_{#1},\hat{\termf}^{#2}_{#1},\emptyset)}}

\newcommand{\ptuple}[2]{\ensuremath{(V'_{#1},\initf'_{#1}, \tcpf'_{#1},E'_{#1},\termf'_{#1},#2)}}

\newcommand{\pptuple}[2]{\ensuremath{(V''_{#1},\initf''_{#1}, \tcpf''_{#1},E''_{#1},\termf''_{#1},#2)}}

% Function overwriting operator
% f \overwrite g denotes the function such that
% if a \in dom(f) then (f \overwrite g) a = f a
% otherwise            (f \overwrite g) a = g a

\newcommand{\ipred}[1]{\ensuremath{\mathrm{id}_{#1}}}

%empty process

%empty location

%inverse reference
\newcommand{\iref}[1]{\ref{#1}^{-1}}

%% End Definitions in corecifops.
\setcounter{sosrule}{1}

\begin{document}

\def\titlerunning{Hierarchical states in the Compositional Interchange Format}
\def\authorrunning{H. Beohar, D.E. Nadales Agut, D.A.v.Beek, \& P.J.L.Cuijpers}

\title{Hierarchical states in the Compositional Interchange Format}

\author{H. Beohar \quad D.E. Nadales Agut \quad D.A. van Beek \quad P.J.L. Cuijpers
%\institute{Formal methods group\\ Department of Computer science and Mathematics}
\institute{Eindhoven University of Technology (TU/e)\\
P.O. Box 513, NL-5600 MB Eindhoven, The Netherlands
}
\email{\{H.Beohar, D.E.Nadales.Agut, D.A.van Beek, P.J.L. Cuijpers\}@tue.nl}
%\and D.E.Nadales
%\institute{System engineering group\\ Department of Mechanical engineering}
%\institute{Eindhoven University of Technology (TU/e)\\
%P.O. Box 513, NL-5600 MB Eindhoven, The Netherlands
%}
%\email{D.E.Nadales.Agut@tue.nl}
%\and D.A.v. Beek
%\institute{System engineering group\\ Department of Mechanical engineering}
%\institute{Eindhoven University of Technology (TU/e)\\
%P.O. Box 513, NL-5600 MB Eindhoven, The Netherlands
%}
%\email{D.A.v.Beek@tue.nl}
%\and P.J.L. Cuijpers
%\institute{Formal methods group\\ Department of Computer science and Mathematics}
%\institute{Eindhoven University of Technology (TU/e)\\
%P.O. Box 513, NL-5600 MB Eindhoven, The Netherlands
%}
%\email{P.J.L.Cuijpers@tue.nl}
}

\maketitle
\begin{abstract}
\oCIF is a language designed for two purposes, namely as a specification language for hybrid systems and as an interchange format for allowing model transformations between other languages for hybrid systems. To facilitate the top-down development of a hybrid system  and also to be able to express models more succinctly in the \oCIF formalism, we need a mechanism for stepwise refinement. In this paper, we add the notion of hierarchy to a subset of the \oCIF language, which we call \mCIF. The semantic domain of the \oCIF formalism is a hybrid transition system, constructed using structural operational semantics. The goal of this paper is to present a semantics for hierarchy in such a way that only the SOS rules for atomic entities in \mCIF are redesigned in comparison to \oCIF. Furthermore, to be able to reuse existing tools like simulators of the \oCIF language, a procedure to eliminate hierarchy from an automaton is given.
\end{abstract}

\section{Introduction}

Modeling languages for hybrid systems, and hybrid automata in
particular, are designed to combine computational and physical aspects
of a system in one formal model. The compositional interchange format
(CIF), presented in
\cite{Beek5:NewConceptsCIFInpADHS09,CIFSEWIKI:SE10}, is a hybrid
modeling language based on hybrid automata \cite{HenzingerTheory}, but with the semantics defined via structural operational semantics (SOS) \cite{Plotkin-sos-jlap} rules. One of the primary aims of \oCIF is to establish inter-operability among a wide range of tools by means of model
transformations to and from the CIF. In addition, it is possible to specify hybrid systems using CIF, and perform simulation. The reason for using an SOS semantics in an automaton-based framework is that the model transformations
to and from CIF are not only to be executed on `complete' models, but also on components of bigger models. Thus, it is crucial that bisimulation (equivalence) is a congruence for all the constructs of the CIF. This is guaranteed using the process-tyft format of \cite{MousaviRenGro}.

The CIF language contains the following features.
\begin{itemize}
\item Predicates in the locations of the automata that constrain the initial values and/or initial locations ($\mathrm{init}$ predicate), time behavior (invariants and time can progress predicates), and action behavior (invariants).
\item Communication among automata using channels and shared variables.
\item Scoping operators for declaring variables, actions, and channels.
\item An initialization operator for restricting the initial conditions of variables. This allows initialization on a more global level as compared to the $\mathrm{init}$ predicates of automata.
\item A synchronization operator for executing actions synchronously in parallel automata.
\item An urgency operator for declaring actions or channels as urgent.
\end{itemize}
To facilitate the top-down development of a hybrid system in the \oCIF
formalism we need a mechanism for stepwise refinement. In this work we
develop such hierarchical extensions for a subset of CIF called \mCIF,
in which we leave out the constructs for scoping, initialization,
synchronization and urgency. Nevertheless, the semantics presented
here is general enough to allow us to incorporate these concepts in a
straightforward manner. In a later phase, we plan to extend \mCIF to contain these constructs again. This requires us to already take into account some semantic features that are particular for the CIF (like the so-called variable trajectories, guard trajectories and termination trajectories that we discuss further on in this paper) while other features only appear in a reduced form (like the so-called environment transitions that are only used for establishing termination, and not for establishing consistency of a state).

% Stepwise refinement
Stepwise refinement is a framework for designing a system correct by
construction. The following steps are involved in the stepwise
refinement framework as laid out in \cite{multiform}. One starts with
design of a system at a higher level of abstraction, and usually the
model designed at this level is called an abstract model. Then a
concrete model is designed by adding more behavior into the abstract
model such that the concrete model is a refinement of the abstract
model. This process of refining is performed until a desired
implementation is reached. Thus, any formalism that supports stepwise
refinement, must incorporate the following two main things. First, it
should provide a way to add new details in an abstract model.
Secondly, it should provide \textit{at-least} sufficient conditions
under which a concrete model is a refinement of a given abstract model
by construction. In this paper, we consider only the first aspect of
stepwise refinement. %and keep the second one as open for future research.

Consequently, we introduce a notion of hierarchy in \mCIF that allows a straightforward way to add new behavior to a given model. In the past, the following techniques were proposed in order to accommodate stepwise refinement in other formalisms.
\begin{itemize}
\item Action refinement \cite{Glabeek90}. In this approach an action in the alphabet of a process or an automaton is substituted by another process/automaton. However, the setting of action refinement is incompatible with the interleaving models of concurrency as pointed out in \cite{UselSmolka93}. Since the \oCIF and \mCIF formalisms are based on interleaving models of concurrency, we disregard this technique of refinement.
\item %State-tree structures \cite[Chap 3.]{Ma-wonham2005} and
    Statecharts \cite{Harel87-visformalism}. Statecharts were the
    first formalism that extended finite state machines with the
    concept of hierarchy. Conventionally, the semantics of statecharts
    requires a tree-structure on the set of locations of a statechart.
    Consequently, additional concepts from tree-structures, like least
    common ancestors, children of a location, etc., make the semantics
    complicated. We show in the current work that these additional
    concepts are unnecessary when reverting to a structural
    operational semantics. We only need to introduce the notion of a
    substructure. Other concepts of state-structures, like AND-states
    and OR-states, can be expressed through the \emph{parallel
      composition} and the \emph{multiple initial locations} of
    substructures, respectively. The concepts of history retention
    and inter-level transitions (not considered in this paper) are not supported directly in \mCIF,
    but they can be emulated.
%The statetree structure reference is commented as it is a control theory paper and does not fit with the paper on hierarchy (Remark by Pieter.)
\item Hierarchical timed automata
      \cite[Chap 4.]{alexandre-david-hts} are the extensions of statecharts with a finite set of clock variables modeling real time. Again the semantics of this formalism is based on the concepts of tree structures and for this reason we also disregard this approach. However, there is a common intuition about the passage of time in \cite{alexandre-david-hts} with the current work. The time can pass in a hierarchical structure only if the time can pass in all the levels of hierarchy, i.e. time transitions must synchronise in all the levels of hierarchy of a hierarchical automaton.
\item State refinement operator \cite{UselSmolka93}. State refinement is a binary operator on process algebraic processes written as $p[q]$ where $p,q$ are arbitrary process terms. Informally, it means that $p$ is a state with the substructure $q$. In other words, a location of an automaton is allowed to contain another automaton representing its substructure. Furthermore, it was also stated \cite{UselSmolka93} that the above way of introducing hierarchy is compatible with the interleaving models of concurrency. Thus, the present work is motivated by the work carried out in \cite{UselSmolka93}, even though the basic entity in our formalism is an automaton rather than an action.
\end{itemize}

%Next we describe a way to introduce hierarchy in an automaton. We assume a hierarchy function $h$ that returns a composite automaton for a given location. A composite automaton is either an automaton (possibly containing hierarchy itself) or a composition (like parallel composition) of composite automata. We find this hierarchy function to be more suitable for the automata theoretic framework of \oCIF than the state refinement operator \cite{UselSmolka93}, which is more suitable in a process algebraic setting. This is because, unlike in process algebra, the development of the CIF is aimed at modeling convenience rather than at finding the smallest representation of a given construct.

The semantics of the \mCIF formalism is a hybrid transition system
(HTS) \cite{CuijpersReniersHeemels02} constructed using  structural
operational semantics (SOS) \cite{Plotkin-sos-jlap}.
The goal of this paper is to show how the semantics of hierarchical
automata can be defined using SOS rules, in a compositional way. We do
so without introducing the complexity of state tree structures in the
formalism, and in such a way that the semantics of other \oCIF\ operators
remain unaffected. As an additional result, we define an algorithm for
eliminating hierarchy, which enable us to reuse existing tools for
implementing an \mCIF\ simulator.

%We use the word `state' to refer to a state in a HTS and the word `location' to refer to a location/vertex/state of an automaton.
%This work is carried out under the Task T4.4 in work page 4 of MULTIFORM project \cite{multiform}. A notion of refinement (i.e., the second aspect of stepwise refinement) is under development and is required for the completion of the Task T4.4. We anticipate that the current work can be easily extended to the \oCIF. In future, the plan is also to evaluate the complete hierarchical CIF by designing various case-studies within the context of MULTIFORM project.
%%\todo{TODO:1. Link with multiform and old CIF. 2. Future use of \mCIF.}
The remainder of this paper is organized as follows. First, the subset
of \oCIF, which is extended with hierarchy, is presented in
Section~\ref{sec:mcif}. Once the basic concepts are introduced,
Section~\ref{sec:adding-hierarchy-cif} introduces the syntactical
extensions that are needed to add support for hierarchy in \oCIF, and
we discuss the design-decisions that led to such extensions. The
formal semantics of \mCIF\ is presented in
Section~\ref{sec:semantics}, where we illustrate how the concept of
hierarchy can be defined in a compositional and recursive way. In
Section~\ref{sec:elim-hierarchy}, we give a procedure to eliminate the
hierarchy from a \mCIF model. Finally, in Section~\ref{sec:conclusion}
we make some conclusive remarks and discuss future work.

\section{Introduction to \oCIF}
\label{sec:mcif}

This section presents the syntax and semantics of a subset of
\oCIF~\cite{Beek5:NewConceptsCIFInpADHS09}, because presenting the
full syntax and semantics would distract too much from the message of
the paper. Still, we have included all the aspects of the \oCIF that
were involved in the design-decisions we made when defining hierarchy.

As was mentioned in the introduction, the \oCIF language is based on
hybrid automata, which model a combination of computational and
physical behavior of a system by mixing automata theory with the
theory of algebraic differential equations. A key feature of the \oCIF
is that it provides a structural operational semantics for such atomic
(hybrid) automata, which makes the definition of more complex
compositions of automata easier. The only compositions defined in this
paper will be parallel composition and hierarchy, but in
\cite{Beek5:NewConceptsCIFInpADHS09} many more are described.

Informally, a basic \oCIF automaton is shown in Figure
\ref{fig:thermostat} that models the dynamics of a thermostat in a
room. This thermostat can be in one of two computational states. It is
either off, or on. This is reflected by the two circles (called
\textit{locations}) labeled \textit{Off} and \textit{On}. Next to the
label, the locations also contain equations (called
\textit{time-can-progress predicates} or \textit{tcp-predicates}) that
model the physical behavior of the system while it is in this
computational state. In our example, the temperature $T$ will behave
according to the differential equation $\dot{T} = -T + 15$ when the
thermostat is off and according to $\dot{T} = -T + 25$ when the
thermostat is on. The dotted version of variables, like $\dot{T}$ in
the example, are used for modeling derivatives.
The execution of a calculation generally results in a change of
location, which is modeled by an arrow (called \textit{edge}) from one location to another.
Edges are labeled by \textit{actions} (in our example
\textit{switch-on} and \textit{switch-off}) that may be used to
synchronize the behavior of automata in a composition (not shown
here). Furthermore,
they contain a predicate (called \textit{guard}) that determines under
which condition an action can be executed, and a predicate (called
\textit{reset}) that determines whether there is a change in any of
the model variables. In case of the thermostat, the \textit{switch-on}
action can only be executed if the temperature $T$ is lower than $20$,
and the action results in a change of the variable $n$ to $n+1$, which
counts the number of times the thermostat is switched on.
Notation $n^+$ is used to refer to the value of $n$ after the
execution of the action.
 The guard $n
\leq 1000$ disables the \textit{switch-off} action,  modeling that the
thermostat breaks down after a thousand switches and leaves the room
hot.
Every location contains an \emph{initialization predicate}, which
determines whether execution can start in that location. In the
example, initially the thermostat is switched off, the temperature in
the room is $T = 25$ and the counter is set to $n = 0$. This is
modeled by an incoming edge in the \textit{Off} location without any
origin, labeled by the predicate $T=25 \wedge n=0$. The absence of
incoming arrows on location $\mathit{Off}$ denotes that the
initialization predicate is $\False$ (which means that execution
cannot start on that location).
%%
%%  Avoid references to semantic concepts whenever possible
%%
%The term \emph{active locations} is
%used to refer to the set of locations for which the initialization
%predicate holds. After an action is performed the a unique active
%location is picked.

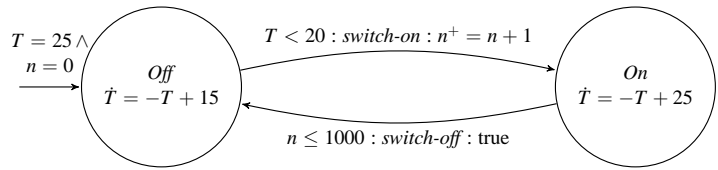
\begin{figure}
    \centering
    \scalebox{0.7}{
    \tikzset{
  super/.style={
    rectangle,
    rounded corners,
    draw=black,
  }
}

\tikzset{
  active/.style={
    draw=red,
  }
}

\tikzset{
  state/.style={
    circle,
    draw=black,
  }
}
\begin{tikzpicture}[->,>=stealth',shorten >=1pt,auto, semithick]

\node (i) {};
\node (off) [state] at ($(i.east)+(2.7,0)$) {$\begin{array}{c}
\mathit{Off} \\ \dot{T}=-T+15
\end{array}$};
\node (on) [state] at ($(off.center) + (9,0)$) {$\begin{array}{c}
\mathit{On} \\ \dot{T}=-T+25
\end{array}$};

\path (i) edge node {$\begin{array}{c}
T=25\, \wedge\\ n=0
\end{array}$} (off);
\path (off) edge  [bend left=12] node {$T < 20 :\textit{switch-on}:
n^{+}=n+1$} (on);
\path (on) edge [bend left=12] node {$ n \leq 1000 :\textit{switch-off}:
\True$} (off);

\end{tikzpicture}
%%% Local Variables:
%%% mode: latex
%%% TeX-master: "../m-HCIF"
%%% End:}
    \caption{A model of thermostat.}\label{fig:thermostat}
\vspace{-1ex}
\end{figure}

Formally, the
locations of an atomic \oCIF\  automata are taken from the set
$\Vertices$.
 Actions
belong to the set $\actions$. We distinguish the following types of
variables: regular variables, denoted by the set $\Vars$; the dotted
versions of those variables, which belong to the set $\dot{\Vars}
\triangleq \{ \dot{x}\ | \ x \in \Vars \}$; and \emph{step variables},
which belong to the set $ \{ x^+ \ |\ x \in \Vars \cup \dot{\Vars}
\}$.
% Discrete and continuous variables
Furthermore, the variables can be classified according their
continuous evolution (i.e. how their values change during time
delays). In particular, we distinguish between \emph{discrete
  variables} ($n$ in the previous example), whose values remain
constant during time delays, and the value of their dotted versions
are always $0$; and \emph{continuous variables} ($T$ in the previous
example), whose values evolve as a continuous function of time during
delays, and whose dotted versions represent their derivatives. Variables
are constrained by differential algebraic equations and we implement
them as predicates. The values of the variables belong to the set
$\Lambda$ that contains, among else, the sets $\setofbooleans$,
$\setofreals$, and $\setofcomplex$. Guards, tcp and initialization predicates, and reset
predicates are taken from the sets $\Pred_g$, $\Pred_t$, and
$\Pred_r$, respectively.

% Discussion about predicates
The exact syntax and semantics of predicates are left as a parameter
of our theory, as we are not interested in the computational aspects
of \oCIF in this paper. In the examples presented here, and in the
tool implementations of CIF, $\Pred_g$, $\Pred_t$, and $\Pred_r$ are
terms of the language of predicate logic
\cite{Reynolds:TheoriesOfProgrammingLanguages99}, where for $\Pred_g$
and $\Pred_t$ the variables are taken from the set $\Vars \cup
\dot{\Vars}$, and for $\Pred_r$ the variables are taken from the set
$\Vars \cup \dot{\Vars} \cup \{ x^+ \ |\ x \in \Vars \cup \dot{\Vars}
\}$.

Given these preliminaries, an atomic automaton can be defined as follows.
\begin{definition}[Atomic automaton] An atomic automaton is a tuple
  $(V, \actv, \tcp, E)$ with a set of locations $V \subseteq
  \Vertices$; initial and time-can-progress predicates $\actv$, $\tcp
  \colon V \rightarrow \Pred_t$; and a set of edges $E \subseteq V
  \times \Pred_g \times \actions \times {\Pred}_r \times
  L$.
\end{definition}

% Compositions
We use symbol $\automata$ to refer to the set of all atomic automata.
Atomic automata, as the one shown before, can be used to build more
complex models by using the parallel composition operator. \oCIF
includes more operators, but we do not discuss them in this paper.
Throughout this work, we use the term \emph{composition} to refer
either to an atomic automata, or to a parallel composition of
automata, which is denoted as $p \parallel q$, for compositions $p$
and $q$, where the set $S\subseteq \actions$ is the set of actions
that must be executed synchronously in both automata. The set
$\ASpec$ contains all \mCIF\ compositions, and is formally defined in
Section~\ref{sec:adding-hierarchy-cif}.

% Outro:
It is not possible to present the formal semantics of \oCIF\ in this
paper due to the lack of space. However, if an automaton has no
hierarchy (i.e. no location contains a composition), the rules
presented here match those of \oCIF.

After introducing the base language, we are ready to show how it can
be extended with hierarchy.
%  A composite automaton is either an automaton
% (possibly containing hierarchy itself) or a composition (like parallel
% composition) of composite automata.

%%
%% I don't think it is necessary...
%%
% For instance, the
% synchronising operator specifies which actions of an automaton must be
% made synchronising, urgent operators specifies urgent actions of an
% automaton, parallel composition allows concurrent execution of two or
% more automata. However, in this paper (i.e. within the context of
% hierarchy) we only consider the parallel composition operator of
% \oCIF.

\section{Adding hierarchy to CIF}
\label{sec:adding-hierarchy-cif}

% Intro
In this section, we show the syntactical extensions that are needed
for adding hierarchy to \oCIF, explaining why every new element is
required according to the design-decisions we have taken.

% What is and how do we incorporate hierarchy
An automaton is said to be \emph{hierarchical} if it contains a
composition in at least one of its locations. Extending \oCIF\ with
hierarchy is easy to achieve with the addition of a hierarchy function
$h$ to the elements of an automaton, such that $h(v)$ returns the
composition contained in $v$, which we refer to as \emph{substructure}.
% Why not using state refinement?
We find this hierarchy function to be more suitable for the automata
theoretic framework of \oCIF\ than the state refinement operator
\cite{UselSmolka93}, which was conceived for process algebraic
setting. This is because, unlike in process algebra, the development
of the CIF is aimed at modeling convenience rather than at finding the
smallest representation of a given construct.

% Example of hierarchical automaton
As an example, suppose we want to extend the model of the thermostat
presented in Figure~\ref{fig:thermostat}, so that it is only switched
off after a certain time has elapsed (which allows the room to be
heated up). Having hierarchy, a refined model of the thermostat can be
elaborated as in Figure \ref{fig:hierarchical-thermostat}. 
We define the hierarchy function $h$ such that $\mathit{Off} \notin \Dom{h}$; 
and $h(\mathit{On})$ returns the automaton shown at the bottom in Figure~\ref{fig:thermostat-clock}, 
which initially sets up a clock $c$ (continuous variable), and it waits until the timer expires $\Delta
\leq c$ (where $0 < \Delta$) to generate the event $\mathit{done}$.

% The thing inside the thing
\begin{figure}[htb]
  \centering
  \scalebox{0.7}{
    \tikzset{
  super/.style={
    rectangle,
    rounded corners,
    draw=black,
  }
}

\tikzset{
  active/.style={
    draw=red,
  }
}

\tikzset{
  state/.style={
    circle,
    draw=black,
  }
}
\begin{tikzpicture}[->,>=stealth',shorten >=1pt,auto, semithick]

\node (i) {};
\node (off) [state] at ($(i.east)+(2.7,0)$) {$\begin{array}{c}
\mathit{Off} \\ \dot{T}=-T+15
\end{array}$};
\node (on) [state] at ($(off.center) + (9,0)$) {$\begin{array}{c}
\mathit{On} \\ \dot{T}=-T+25
\end{array}$};

\path (i) edge node {$\begin{array}{c}
T=25\, \wedge\\ n=0
\end{array}$} (off);
\path (off) edge  [bend left=12] node {$T < 20 :\textit{switch-on}:
n^{+}=n+1$} (on);
\path (on) edge [bend left=12] node {$ n \leq 1000 :\textit{switch-off}:
\True$} (off);

\node (begin) at ($(on.center) + (-5,-5)$) {};
\node (cold) [state] at ($(begin.east)+(2,0)$) {$\begin{array}{c}
\mathit{Cold} \\ \dot{c}=1
\end{array}$};
\node (hot) [state] at ($(cold.center) + (6,0)$) {$\begin{array}{c}
\mathit{Hot} \\ \True
\end{array}$};
\node (end)  at ($(hot.east)+(1,0)$) {};

\path (begin) edge node {$\begin{array}{c}
c=0
\end{array}$} (cold);
\path (cold) edge node {$\Delta \leq  c :\textit{done}:
\True$} (hot);
\path (hot) edge node {$
  \begin{array}{c}
    \True
  \end{array}
$} (end);

\path ($(on.west)+(1,-1)$) [-,draw,dashed] to ($(begin.west)+(0,1)$);
\path ($(on.east)+(-1,-1)$) [-,draw,dashed] to ($(end.east)+(0,1)$);

\end{tikzpicture}
%%% Local Variables:
%%% mode: latex
%%% TeX-master: "../text"
%%% End:}
  \caption{Hierarchical model of the thermostat.}
  \label{fig:hierarchical-thermostat}   \label{fig:thermostat-clock}
\end{figure}
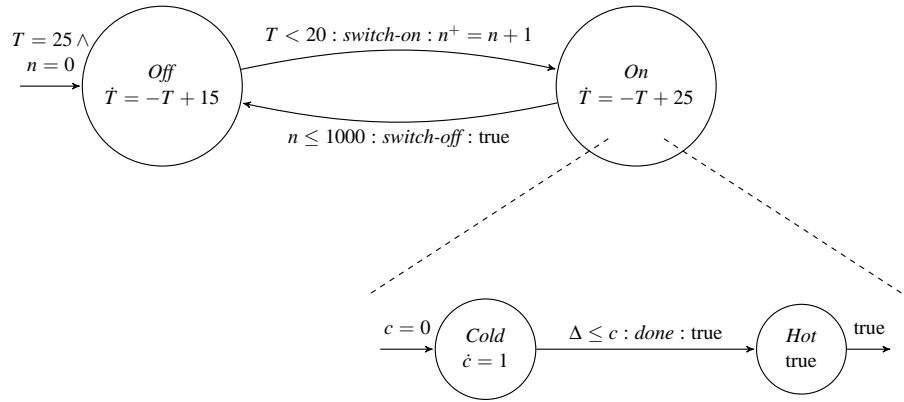

An important thing to note about the substructure in figure \ref{fig:thermostat-clock},
is that one of the states contains an outgoing arrow that leads nowhere.
The predicate on this arrow is called a \emph{termination predicate}
and it is taken, just like tcp and initialization predicates, from the
set $\Pred_t$.

Termination predicates were not part of the atomic automata of CIF before,
even though they are common in the general theory of automata. Our reason
for adding them is that we need a mechanism to decide when a substructure
hands over the control of the execution flow to the superstate to which it
belongs.

In most hierarchical formalisms, such as \cite{Uselton1994} and \cite{Lee:EECS-2009-151},
the actions enabled in the super-automaton are executed regardless of the state of the
substructure. The example in Figure \ref{fig:thermostat-clock}, however, illustrates that
a more general approach, in which the substructure has control over the superstructure, may be useful.
The example actually uses the fact that the substructure has control over the superstructure to restrict
the behavior of the thermostat in such a way that it is forced to stay in the $\mathit{On}$ location for a certain time.
The termination mechanism that we have chosen, originates from our desire to be able to (partially)
translate sequential function charts \cite{Bauer04s.:a} to CIF. In that formalism, termination
is used as the standard mechanism to pass control from one chart to the next.

Another mechanism that we need when dealing with hierarchy, is a way to keep
track of the so-called \emph{active location} that an automaton is in. Admittedly,
the active location of an automaton is a semantic concept, used to keep track of
the current state when describing the dynamics of an automaton, and it does not belong
in a section describing the syntax. However, in the structured operational semantics
of the CIF, the states are formed by pairs of syntactic descriptions and valuations of
variables (see section \ref{sec:semantics} for details), and our solution that deals with
the semantic problem of keeping track of the active location, makes use of an auxiliary
syntactic construct that one should not use while modeling, but that formally is part of
the syntax of CIF.

In the non-hierarchical CIF, the active location of an automaton is fully determined by
its initialization predicate. As a result, a state-change in the semantics is modeled
by changing this initialization predicate. In the hierarchical CIF, however, changing
the initialization predicates of a substructure means changing the hierarchy function $h$,
and this causes so-called history retention; i.e.\ if a substructure is terminated and the
automaton gets back to it later, the substructure will restart where it was ended previously,
because $h$ is still in its altered form. From the example in Figure \ref{fig:hierarchical-thermostat},
one can see that this is not always desirable. History retention in that example, would make
that the thermostat is only forced to linger in the $\mathit{On}$ state the first time
it is entered. In subsequent visits, the substructure would already be in its terminating
state immediately.

The semantics we would like to give to hierarchical CIF, is that a substructure is
restarted every time from one of its (original) initial states. Our solution for the
history retention problem, is to introduce an auxiliary composition operator $p:\alpha$
which should be read as $\alpha$ is currently in the substructure $p$. Here, $p$ is an
arbitrary composite automaton and $\alpha$ is an atomic hierarchical automaton. Next,
the initialization predicate of $\alpha$ can be used to model the active location of the
super-automaton $\alpha$, while the initialization predicate(s) of (the components) $p$
are used to model the active location of the active substructure. We found that
the use of this auxiliary operator greatly simplifies the structured operational
semantics of hierarchical CIF.

Now that we have informally introduced all the syntactic elements needed for
extending \oCIF\ with hierarchy, we define the syntax of hierarchical automata
as follows.

\begin{definition}\label{def:hierarchical-automata}
  An atomic hierarchical hybrid automaton is a tuple $( V,\initf, \tcp, E,
  \termf, h)$ where, $(V, \initf, \tcpf, E)$ is an atomic CIF automaton,
  $\termf: V\rightarrow \Pred$ is a function that associate to each
  location a predicate describing the conditions under which a
  location is final, and $h: V \rightharpoonup \ASpec$ is a (partial)
  hierarchy function that maps each location with a composite
  automaton. The set of composite automata $\ASpec$ in \mCIF is
  recursively defined by the following grammar
  $\ASpec::=  \alpha \mid \ASpec:\alpha \mid \ASpec \paras \ASpec$.
\end{definition}

Henceforth, we use Greek letters $\alpha$ and $\alpha'$ to indicate an atomic hierarchical hybrid automaton
and Roman letters $p$, $q$, $p'$, and $q'$ to indicate any composite automaton in $\ASpec$.

\section{Formal semantics of \mCIF}
\label{sec:semantics}
In this section we illustrate how the concept of hierarchy can be
defined in a compositional way, without introducing additional
concepts of tree-structures present in the statechart
\cite{Harel87-visformalism} formalism.
First, the semantic framework is set up, and then the SOS rules are
presented.

The semantics of \mCIF\ compositions (and \oCIF) is given in terms of
SOS rules, which induce hybrid transition systems (HTS)
\cite{CuijpersReniersHeemels02}. The states of the HTS are of the form
$\langle p, \sigma \rangle$, where $p \in \ASpec$ is a composition and
$\sigma: \Vars \cup \dot{\Vars}\rightarrow \values$ is a function,
called \emph{valuation}, which assigns values to the variables.
Valuations capture the phenomenon of discrete change in the values of
variables caused by the execution of actions in an automaton. We
denote the set of all valuations as $\Sigma$. There are three kind of
transition in the HTS, namely, \emph{action transitions},
\emph{environment transitions}, and \emph{time transitions}. We
describe them in detail next.

Action transition are of the form $\trans{p, \sigma}{a}{p', \sigma'}$,
and they model the execution of action $a$ by process $p$ in an
initial valuation $\sigma$, which changes process $p$ into $p'$ and
results in a valuation $\sigma'$.

Environment transitions are of the form $\ctrans{p,\sigma}{b}{p',\sigma'}$,
and in the full CIF language, they are used to model which possible behavior of
the environment is consistent with that of the composition $p$, but cannot be
executed by the component itself. In the restricted language \mCIF, the function
of the environment transitions is to indicate that a composition $p$ can initialize
to become a composition $p'$ in which an active location is fixed for each (active)
substructure. Furthermore, the boolean $b$ indicates whether the initialized substructure
can terminate, and thus give back the control over actions to their environment.

Time transitions are of the form $\mtrans{p, \sigma}{\rho, \theta,
  \omega}{p', \sigma'}$, and they model the passage of time in
composition $p$, in an initial valuation $\sigma$, which results in a
composition $p'$ and valuation $\sigma'$. The relation between $p$ and
$p'$ is the same as for environment transitions. Function $\rho:
\timep \rightarrow \Sigma$ is called \emph{variable trajectory}, and
it models the evolution of variables during the time delay. For each
time point $s \in \Dom{\rho}$, and for each variable $x \in \Vars \cup
\dot{\Vars}$, the function application $\rho(s)(x)$ yields the value
of variable $x$ at time $s$. Function $\theta:\timep\rightarrow
2^{\actions}$ is called \emph{guard trajectory}, and it models the
evolution of enabled actions during time delays. For each time point
$s\in \Dom{\theta}$, the function application $\theta(s)$ yields the
set of enabled actions of composition $p$ at time $s$. Lastly,
function $\omega$ is called \emph{termination trajectory}, and it
models the evolution of termination during time delays: for each time
point $s\in \Dom{\omega}$, composition $p'$ is terminating at time $s$
if and only if $\omega(s)$. For all time transition
$\Dom{\rho}=[0,t]$, for some time point $t\in \timep$, and
$\Dom{\rho}=\Dom{\theta}=\Dom{\omega}$.

Guard trajectories were shown to be necessary for the definition of
urgency and variable abstraction in CIF and other hybrid formalisms
\cite{CuijpersReniers08,BeekCuijpersMarkovskiNadalesRooda10}.
Termination trajectories allow us to keep track of the possibility of
termination over time, and they are essential for constructing the
guard trajectories in the rules. Even though these concepts are not
necessary for giving semantics to \mCIF, they allow us to solve the
problem of supporting urgency and variable abstraction in a
hierarchical setting, and thus our approach can be extended to the
whole \oCIF without modifying the rules.

Even though predicates are abstract entities, we assume that there is
a satisfaction relation $\sigma \models e$ is defined, which expresses
that predicate $e \in \Pred_t \cup \Pred_g \cup \Pred_r$ is satisfied
(i.e. it is true) in valuation $\sigma$. For predicate logic, this
relation can be defined in a standard way (see
\cite{Reynolds:TheoriesOfProgrammingLanguages99} for example). For a
valuation $\sigma$, we define $\sigma^+\triangleq\{ (v^+, c) \ | \ (v, c) \in
\sigma \}$.

Definition~\ref{def:hts} formalizes the hybrid transition system
induced by the SOS rules presented in the next sections.

\begin{definition}\label{def:hts}
  A hybrid transition system (HTS) is a six-tuple of the form
  $(Q,\actions,\xrightarrow{},\longmapsto,\dashrightarrow)$ where,
  $Q\triangleq\ASpec\times\Sigma$, $\xrightarrow{}\subseteq
  Q\times\actions\times Q$, $\longmapsto\subseteq Q\times\left(
    (\timep\rightarrow\Sigma) \times (\timep\rightarrow 2^{\actions})
    \times (\timep\rightarrow \setofbooleans)\right)\times Q$, and
  $\dashrightarrow\subseteq Q\times \setofbooleans \times Q$.
\end{definition}

\subsection{Hierarchical hybrid automaton}
\label{sec:atomic-automata}

In this section, we give semantics to hierarchical hybrid automata. We
use notation $\alpha$ to refer to an atomic automaton $(V, \actv,
\tcpf, E, \termf, h)$, and $\alpha{[v]}$ to refer to the automaton
$(V, \ipred{v}, \tcpf, E, \termf, h)$, where $\ipred{v}(w)\triangleq v=w$. % We write $\sigma \models e$ to
% denote that a valuation $\sigma$ satisfies a predicate $e \in
% \Pred_i$, with $i \in \{g, t, r\}$.

In absence of hierarchy, an atomic automaton $\alpha$ can perform an
action in a location $v$ and initial valuation $\sigma$ if there is an
edge $(v, g, a, r, v')$ such that the following conditions hold:
\begin{enumerate}
\item Location $v$ is active ($\sigma \models \actv(v)$).
\item Guard $g$ holds ($\sigma \models g$).
\item It is possible to find a new valuation $\sigma'$ such that the
  reset predicate is satisfied in valuation $\sigma \cup {\sigma'}^+$
  ($\sigma \cup {\sigma'}^+ \models r$). We do not write $\sigma'
  \models r$ since, in general, $r$ refers to the next values of
  variables, which are contained in ${\sigma'}^+$.
\end{enumerate}
The above conditions are summarized in the term $\eedge$, which is
syntactically equivalent to condition $(v,g,a,r,v') \in E\wedge
\sigma\models\initf(v) \wedge \sigma\models g\wedge \nvals{\sigma'}
\cup \sigma \models r\enspace$.

Rule~\ref{rule:action:automaton:sa-term} describes what is
semantically involved with the addition of hierarchy. Firstly, it is
necessary to check that the substructure of the initial location, if
any, is terminating (condition $\ctrans{h(v),\sigma}{\true}{p,\sigma}
\vee v\not\in\Dom{h}$). Finally, after the action is performed, the
substructure in the target location, if present, must be initialized
(condition $\trans{\alpha,\sigma}{a}{q:\alpha[v'],\sigma'}$). Note
that the choice of selecting active locations of substructure $h(v')$
is made upon entering location $v'$.
Example~\ref{example:eager-choice} illustrates this phenomenon, which
we call \emph{eager choice}.

\[
\irule
{
\eedge,
\Big(\ctrans{h(v),\sigma}{\true}{p,\sigma} \vee v\not\in\Dom{h}\Big),\\\\
v'\in\Dom{h},\ctrans{h(v'),\sigma'}{b}{q,\sigma'}
}
{
\trans{\alpha,\sigma}{a}{q:\alpha[v'],\sigma'}
}\label{rule:action:automaton:sa-term}
\]

\begin{example}\label{example:eager-choice}
  Consider the composite automaton shown in
  Figure~\ref{fig:parallel-automata} in which the $\tcpf$ function
  gives the predicate $\true$ for all the vertices of the automaton.
  The idea behind eager choice, is that after the execution of the
  action $a$, an initial state of the substructure is picked
  immediately. This can only result in the left state of the
  substructure to be picked, because of the value of $x$ is set to
  $1$ during the execution of $a$. Hence, the action $c$ in the
  substructure will never be executed. The resulting transition
  system generated by the SOS is shown in
  Figure~\ref{fig:eager-parallel-automata}, where the states are
  depicted as circles and their components are not shown.
\begin{figure}[!ht]
  \centering
  \subfloat[][]{\label{fig:parallel-automata}
  \scalebox{0.7}{\tikzset{
  super/.style={
    rectangle,
    rounded corners,
    draw=black,
  }
}

\tikzset{
  active/.style={
    draw=red,
  }
}

\tikzset{
  state/.style={
    circle,
    draw=black,
  }
}
\begin{tikzpicture}[->,>=stealth',shorten >=1pt,auto, semithick]

\node (i) {};
\node (v0) [state] at ($(i.east)+(0.8,0)$) {$v_0$};
\node (v1) [super,matrix] at ($(v0.center)+(0,-3)$) {
    \node (i0) {};
    \node (name) at ($(i0.center) + (2.5,0.6)$) {$v_1$};
    \node (i1) at ($(i0.center) + (3,0)$) {};
    \node (v00) [state] at ($(i0.east)+(1.2,0)$) {$v_{00}$};
    \node (v10) [state] at ($(i1.east)+(1.2,0)$) {$v_{10}$};
    \node (v01) [state] at ($(v00.center)+(0,-2)$) {$v_{01}$};
    \node (v11) [state] at ($(v10.center)+(0,-2)$) {$v_{11}$};
    \node (f01)  at ($(v01.east)+(0.6,0)$) {};
    \node (f11)  at ($(v11.east)+(0.6,0)$) {};\\
};

\path (i) edge (v0);
\path (v0) edge node {$\true:a:(x^{+}=1)$} (v1);
\path (i0) edge node {$x=1$} (v00);
\path (i1) edge node {$x=0$} (v10);
\path (v00) edge node {$b$} (v01);
\path (v10) edge node {$c$} (v11);
\path (v01) edge (f01);
\path (v11) edge (f11);

\end{tikzpicture}
%%% Local Variables:
%%% mode: latex
%%% TeX-master: "../scif_ppt"
%%% End:}}
  \subfloat[][]{\label{fig:eager-parallel-automata}
  \scalebox{0.7}{\tikzset{
  super/.style={
    rectangle,
    rounded corners,
    draw=black,
  }
}

\tikzset{
  active/.style={
    draw=red,
  }
}

\tikzset{
  state/.style={
    circle,
    draw=black,
  }
}
\begin{tikzpicture}[->,>=stealth',shorten >=1pt,auto, semithick]

\node (i) {};
\node (v0) [state] at ($(i.east)+(0.8,0)$) {};
\node (v1) [state] at ($(v0.center)+(0,-2)$) {};
\node (v2) [state] at ($(v1.center)+(0,-2)$) {};
%\node (v3) [state] at ($(v0.center)+(2,-2)$) {$x=1$};
%\node (v4) [state] at ($(v3.center)+(0,-2)$) {};

\path (i) edge (v0);
\path (v0) edge node {$a$} (v1);
\path (v1) edge node {$b$} (v2);
%\path (v0) edge node {$a$} (v3);
%\path (v3) edge node {$c$} (v4);
%\path[-] ($(v3.west) + (0.03,0.45)$) edge node {} ($(v3.east) + (-0.04,-0.45)$);

%\path[-] ($(v3.west) + (0.03,-0.45)$) edge node {} ($(v3.east) + (-0.04,0.45)$);

\end{tikzpicture}
%%% Local Variables:
%%% mode: latex
%%% TeX-master: "../scif_ppt"
%%% End:}}
  \caption{\subref{fig:parallel-automata} Automaton with two
    possible initial states. \subref{fig:eager-parallel-automata}
    Resulting HTS.}
\end{figure}
\end{example}

Rule \ref{rule:action:automaton:sa-term} requires as a condition that
there is an active substructure in the target location $v'\in\Dom{h}$.
If this is not the case then no active substructure is prefixed to
$\alpha[v]$, as expressed by
Rule~\ref{rule:action:automaton:target-h-nd}.

\[
\irule
{
\eedge\com v'\not\in\Dom{h},\\\\
\Big(\ctrans{h(v),\sigma}{\true}{p,\sigma} \vee v\not\in\Dom{h}\Big)
%check for h(v),h(v') defined
}
{
\trans{\alpha,\sigma}{a}{\alpha[v'],\sigma'}
}\label{rule:action:automaton:target-h-nd}
\]

In a hierarchical setting, actions in an automaton can be generated by
the substructure of an active location.
Rule~\ref{rule:action:automaton:source-sa-nterm} formalizes this. Note
that in the conclusion, $p:\alpha[v]$ reflects the fact that an
initial location is chosen in a hierarchical structure if the
substructure performs an action.

\[
\irule
{
\sigma\models\initf(v), v\in\Dom{h},
\trans{h(v),\sigma}{a}{p,\sigma'}
}
{
\trans{\alpha,\sigma}{a}{p:\alpha[v],\sigma'}
}\label{rule:action:automaton:source-sa-nterm}
\]

In \oCIF, a time delay is possible in an active location $v$ if there
exists a trajectory $\rho$ such that the tcp predicate is satisfied in
$[0, t)$. Henceforth, we will use $\vflow$ as an abbreviation of the
predicate $\rho(0) \models \initf(v)\wedge \Dom{\rho} = [0,t]\wedge 0
\leq t \wedge \forall s\in[0,t).[\rho(s) \models \tcp(v)]\enspace$.

% When time passes in $v$, should it also pass in $h(v)$? Moreover,
% since $h(v)$ can contain another active locations, there is a tcp
% predicates associated with each one of them. Then, which tcp
% predicates are to be considered during the time delay?

For time delays, in \mCIF the substructure must perform a time
transition with the same trajectory, and we consider conjunction of
all the tcp predicates of all the active locations of an automaton. In
this way time passes in an automaton, and also in all of its contained
substructures. This is, an automaton and its substructure
synchronize in the time delays. In the complete extension of CIF with
hierarchy, a similar approach is taken for invariants.
Rule~\ref{rule:time:automaton} models this, where
$\Dom{\omega}=\Dom{\rho}\com \Dom{\theta} = \Dom{\rho}$, $\forall_{s
  \in [0, t]}. \omega(s) = \omega_0(s) \wedge
\rho(s)\models\termf(v)$, and $\forall_{ s\in[0,t]}.\theta(s)=
\theta_0(s) \cup \{a \mid (v,g,a,r,v') \in E \wedge \rho(s)\models g
\wedge \omega_0(s)\}$. The guard trajectory $\theta$ as well
as the termination trajectory $\omega$ are constructed by using the
corresponding trajectories generated by the time transition in the
substructure.

We found that this approach is the simple and intuitive: substructures
are part of the whole structure, and it is strange from the modeling
point of view to have a system in which time can ``freeze'' for
certain parts of the system. Furthermore, since there are several
active locations in different levels of hierarchy, it is not clear
which one to choose to perform the time delays. The example of the
thermostat with hierarchy, depicted in Figure~\ref{fig:hierarchical-thermostat}, shows the convenience of this decision,
since we want the clock to advance while the room heats up.

\[
\irule
{
\vflow\com
\mtrans{h(v),\rho(0)}{\rho,\theta_0,\omega_0}{p,\rho(t)}
}
{
\mtrans{\alpha,\rho(0)}{\rho,\theta,\omega}{p:\alpha[v],\rho(t)}
}\label{rule:time:automaton}
\]

The following example illustrates the need for having termination
trajectories to capture properly the set of enabled actions during
time delays.
\begin{example}
Consider the automaton shown in Figure~\ref{fig:enabled-actions} and assume $1<k_0<k_1$. Then the set of enabled actions at the active location will depend on the function $e^x$ and this set is illustrated in Figure~\ref{fig:example-enabled-actions}. In other words, if $0\leq x< \ln(k_0)$ then the set of enabled actions is $\{a\}$. If $\ln(k_0)\leq x <\ln(k_1)$ then the set of enabled actions is $\{a,b\}$. And if $x\geq \ln(k_1)$ then the set of enabled actions is $\{a\}$.
\end{example}
\begin{figure}[!ht]
\centering
\subfloat[]{
\scalebox{0.9}{
\tikzset{
  super/.style={
    rectangle,
    rounded corners,
    draw=black,
  }
}

\tikzset{
  active/.style={
    draw=red,
  }
}

\tikzset{
  state/.style={
    circle,
    draw=black,
  }
}

\begin{tikzpicture}[->,>=stealth',shorten >=1pt,auto, semithick]
  % Damn pgf!
  % at ($(#1.east)+(#1,0)$)
  \node (v) [super,matrix] {
    \node (i) {};
    \node (v0) [state] at ($(i.east)+(0.8,0)$) {$v_0$};
    \node (v1) at ($(v0.center)+(0,1)$) {$\cdots$};
    \node (f)  at ($(v0.east)+(1.2,0)$) {$k_0< e^x$};\\
  };
  \node (v2) at ($(v.center) + (4,0)$) {};
  
  \path (i) edge (v0);
  \begin{tiny}
    \path (v0) edge node {$true,a$} (v1);
  \end{tiny}
  \path (v0) edge (f);
  \path (v) edge node {$e^x<k_1,b$} (v2);

\end{tikzpicture}

%%% Local Variables:
%%% mode: latex
%%% TeX-master: "../testslides"
%%% End:}
\label{fig:enabled-actions}}\quad
\subfloat[]{
\includegraphics[width=0.3\textwidth]{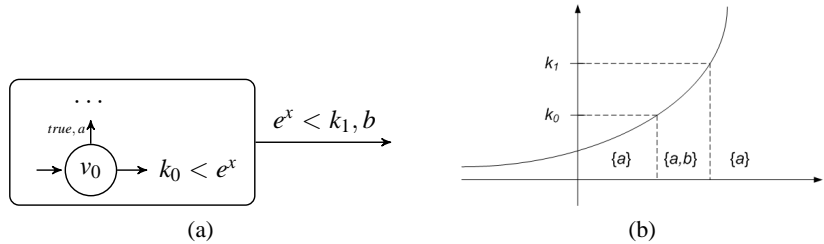}
\label{fig:example-enabled-actions}}
\caption{Illustration of the dependence of enabled actions over time.}
\end{figure}

Rule~\ref{rule:time:automaton:target-h-nd} deals with the case when an
initial location $v$ does not contain a substructure, where
$\Dom{\omega}=\Dom{\rho}\com \Dom{\theta} = \Dom{\rho}$ and
$\forall_{s \in [0, t]}. \omega(s) = \rho(s)\models\termf(v)$, and
$\forall_{ s\in[0,t]}.\theta(s)= \{a \mid (v,g,a,r,v') \in E \wedge
\rho(s)\models g\}\enspace$.

\[
\irule
{
\vflow\com
v\not \in \Dom{h}
}
{
\mtrans{\alpha,\rho(0)}{\rho, \theta, \omega}{\alpha[v],\rho(t)}
}\label{rule:time:automaton:target-h-nd}
\]

In \oCIF, if an automaton performs an environment transition then an
unique active location is chosen. When hierarchy is incorporated, the
substructure is initialized as well. This is expressed by
Rule~\ref{rule:environment:automaton}. The initialized composition $p$
becomes the active substructure of $\alpha[v]$, and the automaton is
terminating if the location and the active substructure are.
Rule~\ref{rule:environment:automaton-target-nd} deals with the case
where there is no substructure.

\begin{mathpar}
\irule
{
\sigma\models\initf(v),
\ctrans{h(v),\sigma}{b}{p,\sigma'}
}
{
\ctrans{\alpha,\sigma}{\sigma\models\termf(v)\wedge b}{p:\alpha[v],\sigma'}
}\label{rule:environment:automaton} \quad

\irule
{
\sigma\models\initf(v),
v\not\in\Dom{h}
}
{
\ctrans{\alpha,\sigma}{\sigma\models\termf(v)}{ \alpha[v],\sigma'}
}\label{rule:environment:automaton-target-nd}
\end{mathpar}

\subsection{Automaton postfix operator}\label{subsec:aut-post-op}

We now define the SOS rules for the automaton postfix operator, which
helps in defining the overall behavior of a hierarchical automaton.

Intuitively, the composition $p:\alpha$ means that composition $p$ is
the active substructure of some initial location $v\in V$ in the
automaton $\alpha$. Note, that whenever the automaton postfix is only used 
as an auxiliary operator, this initial location will always be uniquely 
specified. 

If we now look at the structure of a state $(p,\sigma)$ in the hybrid transition system,
it becomes clear how the postfix operator helps us to mimick the state-tree structures used
in the semantics of statecharts \cite{Harel87-visformalism}. Figure \ref{fig:postfix-operator-stss} shows that a 
composition $p$ in essence \emph{is} a tree, where the postfix operator represents the
edges of the tree and the parallel compositions represents the branching. The root of
this tree is the active location of the automaton we described, while the leaves are
the active substructures where the control over actions currently lies. Indeed, an informal
comparison of our semantics to that of statecharts suggest that that the AND-superstates of statecharts
are represented as (asynchronous) parallel compositions $\parallel_{\emptyset}$, 
while OR-superstates are represented by having multiple locations for which the initialization predicate holds.

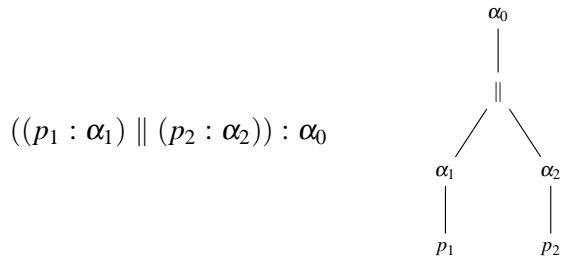
\begin{figure}[htb]
  \centering
  \begin{tabular}{c c}
    \begin{minipage}{0.3\linewidth}
      $((p_1:\alpha_1) \parallel (p_2:\alpha_2)):\alpha_0$
    \end{minipage}
    &
    \begin{minipage}{0.3\linewidth}
      \scalebox{0.7}{
        \tikzset{
  super/.style={
    rectangle,
    rounded corners,
    draw=black,
  }
}

\tikzset{
  active/.style={
    draw=red,
  }
}

\tikzset{
  state/.style={
    circle,
    draw=black,
  }
}

\newcommand{\spce}{1.2}

\begin{tikzpicture}[-,>=stealth',shorten >=1pt,auto, semithick]

\node (atop) {$\alpha_0$};
\node (pcomp) at ($(atop.south)+(0,-\spce)$) {$\parallel$};
\node (aright) at ($(pcomp.south)+(1,-\spce)$) {$\alpha_2$};
\node (aleft) at ($(pcomp.south)+(-1,-\spce)$){$\alpha_1$};
\node (pright) at ($(aright.south)+(0,-\spce)$) {$p_2$};
\node (pleft) at ($(aleft.south)+(0,-\spce)$) {$p_1$};

\path (atop) edge (pcomp);
\path (pcomp) edge (aleft);
\path (pcomp) edge (aright);
\path (aleft) edge (pleft);
\path (aright) edge (pright);

\end{tikzpicture}
%%% Local Variables:
%%% mode: latex
%%% TeX-master: "../scif_ppt"
%%% End:

%%% Local Variables: 
%%% mode: latex
%%% TeX-master: "../text"
%%% End: }
    \end{minipage}
  \end{tabular}
  \caption{Relation between automaton postfix operator and state tree structures}
  \label{fig:postfix-operator-stss}
\end{figure}

The semantics of $p:\alpha$ is reminiscent of the sequential composition of untimed process algebra,
i.e. the composite automaton $p$ in $p:\alpha$ will perform action transitions until it is terminating, 
after that the automaton $\alpha$ can perform its action transitions. The difference between the
sequential composition operator and the automaton postfix operator is due to the difference in the 
passage of time caused by these operators. In the automaton postfix operator, the passage of time is
synchronized between the first and second component, whereas in sequential composition the passage of 
time is not synchronized. There the second component waits for termination of the first, in a similar
way as with action transitions.

All the rules presented here are similar to the those presented in the
previous section. The difference lies in that in the rules of this
section function $h$ is not considered, since there is an active
substructure $p$ in the target state of every transitions appearing
in the conclusions. Rule~\ref{rule:action:postfix-operator:sa-term}
models the action transition when the substructure is terminating.
\[
\irule
  {
    \eedge,
    \ctrans{p,\sigma}{\true}{p',\sigma},
    \ctrans{h(v'),\sigma'}{b}{q,\sigma'}
  }
  {
    \trans{p:\alpha,\sigma}{a}{q:\alpha[v'],\sigma'}
  }\label{rule:action:postfix-operator:sa-term}
\]

Rule~\ref{rule:action:po:target-h-nd} also models the action
transition generated from a postfix operator when the substructure is
terminating and the target location $v'$ does not contain a
substructure. Rule~\ref{rule:action:postfix-operator:sa-nterm} models
the action transition which is a result of the execution of the
substructure.

\begin{mathpar}
\irule
  {
    \eedge,
    \ctrans{p,\sigma}{\true}{p',\sigma},
    v'\not\in\Dom{h}
  }
  {
    \trans{p:\alpha,\sigma}{a}{\alpha[v'],\sigma'}
  }\label{rule:action:po:target-h-nd}
\quad

\irule
  {
    \trans{p,\sigma}{a}{q,\sigma'}
  }
  {
    \trans{p:\alpha,\sigma}{a}{q:\alpha,\sigma'}
  }\label{rule:action:postfix-operator:sa-nterm}
\end{mathpar}

Finally, Rule~\ref{rule:time:postfix-operator} models the passage of
time in an automaton postfix such that the timed transitions are
synchronized in every level of hierarchy $p:\alpha$, where,
$\Dom{\omega}=\Dom{\rho}\com \Dom{\theta} = \Dom{\rho}$ and
$\forall_{s \in [0, t]}. \omega(s) \triangleq \omega_0(s) \wedge
\rho(s)\models\termf(v)$, and $\forall_{ s\in[0,t]}.\theta(s)\triangleq
\theta_0(s) \cup \{a \mid (v,g,a,r,v) \in E \wedge \rho(s)\models g
\wedge \omega_0(s)\}$.

\[
\irule
  {
    \vflow,
    \mtrans{p,\rho(0)}{\rho, \theta_0,\omega_0}{p',\rho(t)}
  }
  {
    \mtrans{p:\alpha,\rho(0)}{\rho, \theta, \omega}{p':\alpha[v],\rho(t)}
  }\label{rule:time:postfix-operator}
\]

Rule~\ref{rule:environment:postfix-operator} models the execution of environment transition in an automaton postfix.
\[
\irule
  {
    \ctrans{p,\sigma}{b}{p',\sigma'}
  }
  {
    \ctrans{p:\alpha,\sigma}{\sigma\models\termf(v)\wedge b}{p':\alpha[v],\sigma'}
  }\label{rule:environment:postfix-operator}
\]

\subsection{Parallel composition operator}\label{sem-paras}

The parallel composition operator allows synchronisation of equally labeled action transitions between any two components that are specified by the synchronisation set $S\subseteq \actions$. Rule~\ref{rule:action:sync-parallel} models this fact.

\[
\irule
{
\trans{p,\sigma}{a}{p',\sigma'},
\trans{q,\sigma}{a}{q',\sigma'},
a\in S
}
{
\trans{p \paras q,\sigma}{a}{p' \paras q',\sigma'}\\\\
\trans{q \paras p,\sigma}{a}{q' \paras p',\sigma'}
}\label{rule:action:sync-parallel}
\]

Rule~\ref{rule:action:interleaving-parallel-p} models the interleaving of action transitions that do not belong to the synchronisation set $S$. Note the presence of environment transition in the premise of the following rule. This allows the other component (which does not perform an action transition, in the following case $q$) to get initialised.

\[
\irule
{
\trans{p,\sigma}{a}{p',\sigma'},
\ctrans{q,\sigma}{b}{q',\sigma'},
a\not\in S
}
{
\trans{p \paras q,\sigma}{a}{p' \paras q',\sigma'}\\\\
\trans{q \paras p,\sigma}{a}{q' \paras p',\sigma'}
}\label{rule:action:interleaving-parallel-p}
\]

In a parallel composition, time can pass it is can pass in each
component individually, as it can be seen in
Rule~\ref{rule:time:parallel}, where $\forall s\in[0,t].
\big[\theta_{01}(s)=\left(\theta_0(s)\cap\theta_1(s)\right) \cup
\left(\theta_0(s)\setminus S\right) \cup \left(\theta_1(s)\setminus
  S\right)\big]$, and $\forall
s\in[0,t].[\omega_{01}(s)=\omega_0(s)\wedge \omega_1(s)]$.
The guard trajectory is constructed (same as in \oCIF) from the guard
trajectories of the composite automata interleaving in the parallel
composition: at a given time point $s$, an action is enabled in the
parallel composition $p \paras q$ if it is enabled in $p$ and $q$
(regardless of whether the action is in $S$), or if it is enabled in
$p$ or $q$ and is not synchronizing in the other component. The
termination trajectory in the parallel composition at a given time
point $s$ is the conjunction of the termination trajectories of the
respective components at the same time instant $s$.

\[
\irule
{
\mtrans{p,\rho(0)}{\rho,\theta_0,\omega_0}{p',\rho(t)},
\mtrans{q,\rho(0)}{\rho,\theta_1,\omega_1}{q',\rho(t)}
}
{
\mtrans{p\paras q,\rho(0)}{\rho,\theta_{01},\omega_{01} }{p'\paras q',\rho(t)}\\\\
\mtrans{q\paras p,\rho(0)}{\rho,\theta_{01},\omega_{01} }{q'\paras p',\rho(t)}
}\label{rule:time:parallel}
\]

The initialization of a parallel composition
(Rule~\ref{rule:environment:parallel}) is the initialization of its
components. The termination predicate in the parallel composition is
the conjunction of the termination predicates of the respective
components.

\[
\irule
{
\ctrans{p,\sigma}{b_0}{p',\sigma'},
\ctrans{q,\sigma}{b_1}{q',\sigma'}
}
{
\ctrans{p\paras q,\sigma}{b_0\wedge b_1}{p' \paras q',\sigma'}\\\\
\ctrans{q\paras p,\sigma}{b_0\wedge b_1}{q' \paras p',\sigma'}
}\label{rule:environment:parallel}
\]

\subsection{Stateless bisimulation}\label{subsec:stateless-bisim}

It is clear from the definition of a HTS (Definition~\ref{def:hts}) that a state in a transition system consists of a process part (a behavioural entity) and a data part (valuation). Furthermore, we know that the stateless bisimulation is the most robust equivalence for the transition systems whose states contains data \cite{MousaviRenGro}. This subsection shows that the semantics of \mCIF is compositional with respect to stateless bisimulation \cite{MousaviRenGro}, i.e. stateless bisimulation is a congruence for all operators of $\mCIF$.
%\begin{proposition}
%$(\ASpec\times\Sigma,\actions,\xrightarrow{},\longmapsto, \dashrightarrow)$ is the hybrid transition system induced by the constructs of \mCIF.
%\end{proposition}

\begin{definition} \label{def:stateless-bisim}
A symmetric relation $R\subseteq\ASpec\times\ASpec$ is called a
stateless bisimulation \cite{MousaviRenGro} relation iff the following transfer conditions hold.
\begin{itemize}
\item $\forall p,p',\sigma,\sigma',a,q.
    \Big[\trans{p,\sigma}{a}{p',\sigma'} \wedge (p,q)\in R \Rightarrow \exists q'.\big[\trans{q,\sigma}{a}{q',\sigma'}\wedge (p',q')\in R\big]\Big]\enspace.$
\item $\forall p,p',\sigma,\sigma',\rho,\theta,\omega,q.
    \Big[\mtrans{p,\sigma}{\rho,\theta,\omega}{p',\sigma'} \wedge (p,q)\in R \Rightarrow \exists q'.\big[\mtrans{q,\sigma}{\rho,\theta,\omega}{q',\sigma'}\wedge (p',q')\in R\big]\Big]\enspace.$
\item $\forall p,p',\sigma,\sigma',b,q.
    \Big[\ctrans{p,\sigma}{b}{p',\sigma'} \wedge (p,q)\in R \Rightarrow \exists q'.\big[\ctrans{q,\sigma}{b}{q',\sigma'}\wedge (p',q')\in R\big]\Big]\enspace.$
\end{itemize}
Two composite automata $p,q$ are said to be stateless bisimilar
(denoted $p\sbisim q$) iff there exists a stateless bisimulation relation $R$ such that $(p,q)\in R$.
\end{definition}

\begin{theorem}\label{thm:congruence}
Stateless bisimulation is a congruence for all the constructs of \mCIF.
\begin{proof}
The SOS rules of \mCIF are in the \textit{process-tyft} format, which guarantees the congruence for stateless bisimilarity \cite{MousaviRenGro}.
\end{proof}
\end{theorem}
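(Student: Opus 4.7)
The plan is to reduce the statement to a syntactic check on the SOS rules, invoking the general meta-theorem of Mousavi--Reniers--Groote which states that any transition system specification in the process-tyft format admits stateless bisimilarity as a congruence. So the real work is to confirm that every rule in Sections \ref{sec:atomic-automata}--\ref{sem-paras} lies in that format.

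First, I would recall the requirements of the process-tyft format. Roughly: the source term of the conclusion must have the shape $f(x_1,\dots,x_n)$ with the $x_i$ pairwise distinct process variables; the process variables in the targets of the premises must all be distinct and different from $x_1,\dots,x_n$; all premises must be transitions (not negative or lookahead constraints); and the data components (valuations and trajectories) must obey the well-definedness conditions of the process-tyft variant, namely that data variables appearing in targets of the conclusion are either bound by premises or appear as source data in the conclusion.

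Next, I would go through each rule group in turn. For the atomic automaton rules (Rules~\ref{rule:action:automaton:sa-term}--\ref{rule:environment:automaton-target-nd}), the source of every conclusion is simply $\alpha$ (a single process variable, vacuously distinct); premises mention $h(v)$, which is a term built from $\alpha$ but used only as a source for transitions whose targets $p,q$ are fresh process variables; the remaining premises are predicate satisfactions and edge memberships, which in process-tyft are treated as side conditions on the data. For the postfix rules (Rules~\ref{rule:action:postfix-operator:sa-term}--\ref{rule:environment:postfix-operator}), the source of each conclusion is $p:\alpha$ with $p,\alpha$ distinct variables, and the premise transitions have sources $p$ and $h(v)$ with fresh target variables $p',q$. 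For the parallel rules (Rules~\ref{rule:action:sync-parallel}--\ref{rule:environment:parallel}), the source is $p \paras q$ with distinct $p,q$, and the two premises yield fresh targets $p',q'$.

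The main obstacle I expect is not the skeleton but the data side. The rules make heavy use of valuations $\sigma,\sigma'$, of trajectories $\rho,\theta,\omega$, and of derived trajectories such as $\theta_{01},\omega_{01}$ defined through universal quantifications over $[0,t]$. I would need to argue that these constructions fit the process-tyft treatment of data: each trajectory or valuation used in a conclusion is either identical to one in the source of the conclusion, supplied by a premise, or obtained by a pointwise function of such. The side conditions $\sigma\models\initf(v)$, $\eedge$, $\vflow$, and the definitions of $\theta$, $\omega$ can then be absorbed into the data-side predicates allowed in the format. A second small obstacle is the use of $\alpha[v]$ in targets; this is not a fresh variable but a syntactic term functionally determined by $\alpha$ and $v$, which is permitted since targets in process-tyft may be arbitrary terms over the signature and the premise-bound variables.

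Once this rule-by-rule verification is complete, the theorem follows immediately by instantiating the Mousavi--Reniers--Groote congruence result. I would therefore structure the proof as: (i) a brief restatement of the process-tyft format, (ii) a table or enumeration confirming each rule satisfies the format, with the data-side points flagged, and (iii) the appeal to the meta-theorem.
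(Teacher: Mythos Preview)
Your proposal is correct and follows exactly the paper's approach: the paper's proof is a one-line appeal to the process-tyft format of \cite{MousaviRenGro}, and you propose the same, merely fleshing out the rule-by-rule format check that the paper leaves implicit. The only point worth flagging is the disjunctive premise in Rule~\ref{rule:action:automaton:sa-term} (and~\ref{rule:action:automaton:target-h-nd}), which you would need to split into two rules to fit the format literally, but this is routine.
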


\section{Elimination of hierarchy}\label{sec:elim-hierarchy}

In this section, we present a technique that converts a hierarchical
automaton into an automaton in which the hierarchical function $h$ is
empty, such that they are stateless bisimilar. Such techniques in general, are known as \emph{linearization} or \emph{elimination} of operators\cite{Usenko}. The advantage of \emph{flattening} a hierarchical automaton is that it allows the reuse of existing tools like simulators of the \oCIF language, which are only developed for flat automata. % A disadvantage of flattening, is that (so far) we only know how to do it for hierarchical automata that are, in a sense, well founded.

\begin{definition}
The \emph{depth} $D(p)$ of an composite automaton $p \in \ASpec$ is recursively defined to be the 1 + $\max_{v \in \dom(h)} \ D(h(v))$ when $p$ is a hierarchical automaton of the form $(V,\initf,\tcpf,E,\termf,h)$, and is defined as $\max (D(q),D(r))$ whenever $p = q \paras r$. An automaton is called \textit{well founded} whenever its depth is defined. An automaton of depth $1$ (i.e. with $\dom(h) = \emptyset$) is also called a \textit{flat} automaton.
\end{definition}

Suppose that we have a procedure $\enc$ that turns any composition of flat automata into a stateless bisimilar flat automaton. In particular, suppose that $\enc(\alpha \paras \alpha') \sbisim \alpha \paras \alpha'$ whenever $\alpha$ and $\alpha'$ are flat automata. Then, we can lift this procedure to any well-founded composite automaton $p \in\ASpec$, by first applying it to all components of $p$ before applying it to $p$ itself. We define $\enc(p \paras q) = \enc(\enc(p) \paras \enc(q))$ and $\enc((V,\initf,\tcpf,E,\termf,h)) = \enc((V,\initf,\tcpf,E,\termf,\tilde{h}))$ with $\tilde{h}(v) = \enc(h(v))$, for any $p$, $q$ and $(V,\initf,\tcpf,E,\termf,h)$ of depth greater than $2$. Structural induction on the depth of the composite automaton, combined with the congruence obtained in theorem \ref{thm:congruence}, then gives us $\enc(p) \sbisim p$ for all well-founded composite automata $p$.

Such a procedure is already known for all the usual operations of the CIF\cite{tim}, and next, we will give it for hierarchical automata.

\begin{definition}\label{def:flattening}
Let $\alpha\in\ASpec$ be an automaton of depth $2$ of the form $(V,\initf,\tcpf,E,\termf,h)$, such that $h(v)$ is a flat automaton for all $v \in \dom(h)$. We define $\enc(\alpha)=(\hat{V},\finitf,\ftcpf,\hat{E},\ftermf,\emptyset)$ where,
\begin{itemize}
\item The set $\hat{V}$ of locations of the flat automaton $\enc(\alpha)$ is defined by:
\[\hat{V}\triangleq
\bigcup_{v\in V} \left\{(v,w)\Big| \begin{array}{l}
\big(v\not\in\Dom{h}\wedge w=\bot\big) \quad\vee\\
\big(h(v)=\ptuple{}{\emptyset} \wedge w\in V'\big)
\end{array} \right\}
\]

\item The predicate-functions $\hat{\Box}$, with $\Box\in\{\initf,\tcpf,\termf\}$, are defined for each $\hat{v}\in\hat{V}$ by:
\[\hat{\Box}(\hat{v})\triangleq\left\{
                                 \begin{array}{ll}
                                   \Box(v), & \hbox{if $\hat{v}=(v,\bot)$} \\
                                   \Box(v)\wedge \Box'(w), & \hbox{if $\left(
\hat{v}=(v,w)\wedge h(v)=\ptuple{}{\emptyset}\wedge
w\in V'\right)$}
                                 \end{array}
                               \right.\]

\item The edges $\left(\hat{v}_0,g',a,r',\hat{v}_1\right)$ of the flat automaton $\enc(\alpha)$ are present (i.e. $\left(\hat{v}_0,g',a,r',\hat{v}_1\right)\in \hat{E}$) iff one of the following conditions hold:
\begin{enumerate}
%\item $\hat{v}_0=(v_0,\bot)\wedge \hat{v}_1=(v_1,\bot)$ for some $v_0,v_1\in V$ such that, \[(v_0,g',a,r',v_1)\in E\enspace.\]
%\item $\hat{v}_0=(v_0,w_0) \wedge \hat{v}_1=(v_1,\bot)$ for some $v_0,v_1\in V$ such that,
%    \begin{eqnarray*}
%    && \enc(h(v_0))=\ptuple{}{\emptyset} \wedge (v_0,g,a,r',v_1)\in E \wedge \\
%    && w_0\in V'\wedge g'=(\termf'(w_0)\wedge g)\enspace.
%    \end{eqnarray*}
%\item $\hat{v}_0=(v_0,\bot) \wedge \hat{v}_1=(v_1,w_1)$ for some $v_0,v_1\in V$ such that,
%    \begin{eqnarray*}
%    && \enc(h(v_1))=\ptuple{}{\emptyset} \wedge (v_0,g',a,r,v_1)\in E \wedge\\
%    && w_1\in V'\wedge r'=(r\wedge\initf'(w_1)^{+})\enspace.
%    \end{eqnarray*}
\item $\hat{v}_0=(v_0,w_0) \wedge \hat{v}_1=(v_1,w_1)$ for some $v_0,v_1\in V$ such that $h(v_i)=\ptuple{i}{\emptyset}$, $w_i\in V'_i$, for $i\in\{0,1\}$ and
    \begin{eqnarray*}
    && (v_0,g,a,r,v_1)\in E \wedge g'=\left(\termf'_0(w_0)\wedge g\right) \wedge r'=\left(r\wedge \initf'_1(w_1)^{+}\right)\enspace.
    \end{eqnarray*}
    Note that if $w_0=\bot$ ($w_1=\bot$) then by defining $\termf'_0(w_0)=\true$ ($\initf'_1(w_1)=\true$) we get definitions for the simpler cases derived from the above one.
\item $\hat{v}_0=(v,w_0)\wedge \hat{v}_1=(v,w_1)$ for some $v_0\in V$ such that
    \begin{eqnarray*}
    && h(v)=\ptuple{}{\emptyset} \wedge w_0,w_1\in V'\wedge (w_0,g',a,r',w_1)\in E' \enspace.
    \end{eqnarray*}
\end{enumerate}
\end{itemize}
\end{definition}

% Example
Figure~\ref{fig:flat-thermostat-clock} shows the resulting automaton
after applying the linearization procedure to the thermostat model
extended with a clock
(Figure~\ref{fig:thermostat-clock}).
\vspace{-1ex}

\begin{figure}[htb]
  \centering
  \scalebox{0.7}{
    \tikzset{
  super/.style={
    rectangle,
    rounded corners,
    draw=black,
  }
}

\tikzset{
  active/.style={
    draw=red,
  }
}

\tikzset{
  state/.style={
    circle,
    draw=black,
  }
}
\begin{tikzpicture}[->,>=stealth',shorten >=1pt,auto, semithick]

\node (i) {};
\node (off) [state] at ($(i.east)+(2.7,0)$) {$\begin{array}{c}
\mathit{Off} \\ \dot{T}=-T+15
\end{array}$};
\node (oncold) [state] at ($(off.center) + (10,2)$) {$\begin{array}{c}
\mathit{On},\mathit{Cold} \\ \dot{T}=-T+25 \\ \wedge \dot{c} = 1
\end{array}$};
\node (onhot) [state] at ($(off.center) + (10,-2)$) {$\begin{array}{c}
\mathit{On},\mathit{Hot} \\ \dot{T}=-T+25
\end{array}$};

\path (i) edge node {$\begin{array}{c}
T=25\, \wedge\\ n=0
\end{array}$} (off);
\path (off) edge  [bend left=12] node [above=20] {$T < 20 :\textit{switch-on}:
  n^{+}=n+1 \wedge c^+=0$} (oncold);
\path (onhot) edge [bend left=12] node [below=20] {$ n \leq 1000 :\textit{switch-off}:
  \True$} (off);
\path (oncold) edge [bend left=12] node {$ \Delta \leq t :\textit{done}:
  \True$} (onhot);

\end{tikzpicture}
%%% Local Variables:
%%% mode: latex
%%% TeX-master: "../m-HCIF"
%%% End:}
  \caption{A flat model of thermostat with
    clock.}\label{fig:flat-thermostat-clock}
\end{figure}
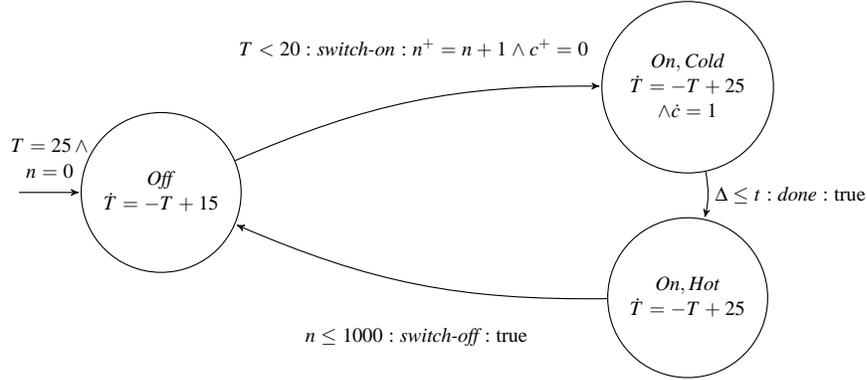

Next we prove the correctness of our linearization procedure as defined in Definition~\ref{def:flattening}.

\begin{theorem}\label{thm:elim}
Let $\alpha\in\ASpec$ be an automaton of depth $2$ of the form $(V,\initf,\tcpf,E,\termf,h)$, such that $h(v)$ is a flat automaton for all $v \in \dom(h)$, then $\alpha \sbisim \enc(\alpha)$.
\end{theorem}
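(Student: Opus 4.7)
The plan is to exhibit an explicit stateless bisimulation relation $R$ and verify that $(\alpha,\enc(\alpha))\in R$; Definition~\ref{def:stateless-bisim} then yields $\alpha\sbisim\enc(\alpha)$. Because $\alpha$ has depth~$2$, for each $v\in\Dom{h}$ we may write $h(v)=\alpha'_v=\ptuple{v}{\emptyset}$, a flat automaton. A quick inspection of the SOS rules of Section~\ref{sec:atomic-automata} shows that every derivative of $\alpha$ reachable by action, time, or environment transitions has one of three shapes, so I would take $R$ to be the smallest symmetric relation containing, for all relevant $v$ and $w$: the pair $(\alpha,\enc(\alpha))$; the pair $(\alpha[v],\enc(\alpha)[(v,\bot)])$ when $v\in V\setminus\Dom{h}$; and the pair $(\alpha'_v[w]:\alpha[v],\enc(\alpha)[(v,w)])$ when $v\in\Dom{h}$ and $w\in V'_v$.

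The correctness of $R$ rests on two correspondences visible in Definition~\ref{def:flattening}. First, the predicate-functions of $\enc(\alpha)$ are aligned with the conjunctions forced by the SOS: $\finitf((v,\bot))=\initf(v)$ and $\finitf((v,w))=\initf(v)\wedge\initf'_v(w)$, and analogously for $\ftcpf$ and $\ftermf$. Second, the two clauses defining $\hat{E}$ mirror the two ways an action is produced in the hierarchical semantics. Clause~1, under the convention $\termf'_v(\bot)=\initf'_v(\bot)=\true$, reproduces exactly the premises of Rules~\ref{rule:action:automaton:sa-term}, \ref{rule:action:automaton:target-h-nd}, \ref{rule:action:postfix-operator:sa-term}, and \ref{rule:action:po:target-h-nd}, in which a top-level edge $(v,g,a,r,v')$ fires while the source substructure terminates (witnessed by an environment transition of $\alpha'_v[w_0]$ with boolean $\true$, which by Rule~\ref{rule:environment:automaton-target-nd} amounts to $\sigma\models\termf'_v(w_0)$) and the target substructure initializes (adding $\initf'_{v'}(w_1)^+$ to the reset); clause~2 reproduces Rules~\ref{rule:action:automaton:source-sa-nterm} and \ref{rule:action:postfix-operator:sa-nterm}, in which an internal edge $(w_0,g',a,r',w_1)\in E'_v$ fires with the outer location unchanged. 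In every case the target on the $\enc(\alpha)$ side is of the form $\enc(\alpha)[(v',w_1)]$ (or $\enc(\alpha)[(v',\bot)]$), whose $R$-partner is precisely the $\alpha$-side target.

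For time transitions the matching is analogous: Rule~\ref{rule:time:postfix-operator} applied to $\alpha'_v[w]:\alpha[v]$ composes the trajectory obtained from $\alpha'_v[w]$ via Rule~\ref{rule:time:automaton:target-h-nd} with the outer tcp, termination predicate, and guards, yielding $\omega(s)\equiv\rho(s)\models\termf(v)\wedge\termf'_v(w)$ and $\theta(s)$ equal to the union of the $E'_v$-guards enabled at~$w$ and the $E$-guards enabled at~$v$ conjoined with $\termf'_v(w)$. These are exactly the trajectories produced by Rule~\ref{rule:time:automaton:target-h-nd} on $\enc(\alpha)[(v,w)]$ once $\ftcpf((v,w))=\tcpf(v)\wedge\tcpf'_v(w)$, $\ftermf((v,w))=\termf(v)\wedge\termf'_v(w)$, and the two clauses of $\hat{E}$ are unfolded. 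Environment transitions on the same pair reduce, via Rules~\ref{rule:environment:postfix-operator} and \ref{rule:environment:automaton-target-nd}, to the label $\sigma\models\termf(v)\wedge\termf'_v(w)$, matching $\sigma\models\ftermf((v,w))$ on the right. The pairs $(\alpha,\enc(\alpha))$ and $(\alpha[v],\enc(\alpha)[(v,\bot)])$ are handled as simpler instances, with $w=\bot$ absorbing the sub-cases where a substructure is absent, and the reverse direction of each transfer condition is completely symmetric since the correspondence between premises is an equivalence, not just an implication.

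The main obstacle is purely bookkeeping: there are three shapes of pair, three kinds of transition, and several SOS rules per transition type, further split by whether the source or target location carries a substructure, so the verification is an extensive but essentially mechanical case analysis that must keep the two clauses of Definition~\ref{def:flattening} carefully aligned with the corresponding premises in the SOS rules. Note that no appeal to Theorem~\ref{thm:congruence} is needed for the depth-$2$ statement itself; congruence only enters when lifting $\enc$ inductively to arbitrary well-founded compositions, as sketched above Definition~\ref{def:flattening}.
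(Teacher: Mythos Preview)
Your proposal is correct and follows essentially the same approach as the paper: both exhibit the same explicit witnessing relation $R$ pairing $\alpha$ with $\enc(\alpha)$, $\alpha[v]$ with $\enc(\alpha)[(v,\bot)]$, and $h(v)[w]:\alpha[v]$ with $\enc(\alpha)[(v,w)]$, and then verify the transfer conditions by a case analysis on the SOS rules, using precisely the correspondence between the two clauses of $\hat{E}$ and the two sources of action transitions (outer edge with terminating substructure versus inner edge) that you spell out. The paper isolates the equality of guard and termination trajectories for time transitions as a separate lemma, whereas you fold that calculation into the main argument, but the content is identical.
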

\begin{proof}
Fix $\hat{\alpha} = \enc(\alpha)=(\hat{V},\finitf,\ftcpf,\hat{E},\ftermf,\emptyset)$. It is rather straightforward but tedious to verify that the relation $R = \{ \big(\alpha,\hat{\alpha}\big), \big(\alpha[v],\hat{\alpha}[(v,\bot)]\big), \big(h(v)[w]:\alpha[v],\hat{\alpha}[(v,w)]\big)\mid v \in V \ \wedge (v,w) \in \hat{V} \}$, is a witnessing stateless bisimulation.
\end{proof}

\section{Conclusions}\label{sec:conclusion}
In this paper we illustrated how to add hierarchy to a subset of the
CIF (called \mCIF) in a compositional way, and we showed that the SOS
rules of atomic entities can be modified without altering the rules of
the CIF operators.
Moreover, the rules are formulated in such a way that the addition of
concepts such as urgency and invariants can be incorporated easily
without altering the rules presented here.
However, the usability of \mCIF is not yet investigated and the plan is to evaluate it by performing industrial case-studies within the context of the MULTIFORM project \cite{multiform} after extending it with the remaining operators of the CIF. A procedure to eliminate hierarchy was given in order to be able to use the existing tools associated with CIF. Note that Definition~\ref{def:flattening} presented here is a relatively inefficient one to implement. It can be further optimized, for example, by disallowing the edges in the set $\hat{E}$ that are never executed for any valuation.

As ongoing work, we are researching a branching version of stateless
bisimulation for hybrid transition systems in order to handle $\tau$ action as `invisible'. Using such a notion, it becomes possible to formalize when a stepwise refinement is a correct refinement of an abstract model. Thus we can attack the second aspect of stepwise refinement mentioned in the introduction.

\textbf{Acknowledgements}: The authors would like to thank Jos Baeten,
Koos Rooda and Mohammad Mousavi,  and the reviewers of SOS, for their
constructive comments that very much helped to improve this paper.

This work has been performed as part of the "Integrated Multi-formalism Tool Support for the Design of Networked Embedded Control Systems (MULTIFORM) project, supported by the Seventh Research Framework Programme of the European Commission.
Grant agreement number: INFSO-ICT-224249.

\bibliographystyle{plain}
\bibliography{hybchi}  % sigproc.bib is the name of the Bibliography in this case

\end{document}